\numberwithin{equation}{section}
\theoremstyle{plain}
\newtheorem{theorem}{Theorem}[section]
\newtheorem{lemma}[theorem]{Lemma}
\theoremstyle{definition}
\newtheorem{definition}[theorem]{Definition}
\theoremstyle{remark}
\newcommand{\E}{\mathbb{E}}
\title{Sparse Random Matrices for Dimensionality Reduction}
\author{
  Pierre Mackenzie\\
  Department of Computer Science\\
  University of British Columbia\\
  \texttt{lardet[dot]pierre[at]gmail.com} \\
}
\begin{document}

\maketitle

\begin{abstract}
The Johnson-Lindenstrauss (JL) theorem states that a set of points in high-dimensional space can be embedded into a lower-dimensional space while approximately preserving pairwise distances with high probability~\cite{OgJL}. The standard JL theorem uses dense random matrices with Gaussian entries. However, for some applications, sparse random matrices are preferred as they allow for faster matrix-vector multiplication. I outline the constructions and proofs introduced by \citet{binarycoins} and the contemporary standard by \citet{sparser}. Further, I implement and empirically compare these sparse constructions with standard Gaussian JL matrices.
\end{abstract}
\section{Introduction}
\subsection{Motivation}

Much of the data in the modern world lives in high dimensions. Whether it be images, text embeddings, counts, or countless other modalities, data is often represented as vectors in a high-dimensional space. However, high-dimensional data can be challenging to work with, not least because of computational and storage costs. Dimensionality reduction techniques aim to address these challenges by mapping high-dimensional data to a lower-dimensional space while preserving important properties.

\cite{OgJL} introduced the JL lemma, which provides a theoretical foundation for randomised dimensionality reduction. They proved that one can construct a random linear map $R\in \mathbb{R}^{k\times d}$ that approximately preserves pairwise distances between points with high probability. However, multiplication by a dense random matrix can be computationally expensive. This motivated the so-called `Fast-JL' transforms~\cite{FastJL}, which leveraged carefully structured random matrices to achieve multiplication in $O(d \log d)$ rather than $O(dk)$ time.

However, many real-world datasets are not only high-dimensional but also sparse, meaning that most of their entries are zero. Such sparsity can be found in text data represented as bag-of-words vectors, user-item interaction matrices in recommendation systems, or in streaming application where received inputs have only one non-zero entry. It would be advantageous to have JL transforms that can exploit sparsity in the input vectors to reduce computational costs. This led to the development of sparse JL transforms~\cite{binarycoins,sparse,sparser}, which use random matrices with many zero entries while preserving the same probabilistic guarantee.

In this project, I explore the important works in sparse JL transforms. I focus on the seminal work by~\citet{binarycoins} and the modern standard by~\citet{sparser}. I present their constructions and proofs in Section~\ref{sec:jl_transforms}, and I empirically evaluate their performance in Section~\ref{sec:experiments}. 

\subsection{Preliminaries}\label{sec:preliminaries}

\begin{theorem}[Markov Inequality]
    Let \(X\) be a non-negative random variable. Then, for any \(a > 0\):
    \begin{equation}
        \Pr[X \geq a] \leq \frac{\E[X]}{a}
    \end{equation}
\end{theorem}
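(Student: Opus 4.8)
The plan is to prove the inequality by bounding $X$ from below by a scaled indicator random variable and then invoking monotonicity of expectation. The key observation is that the non-negativity of $X$ lets me compare it pointwise to a simple two-valued function whose expectation is exactly $a\,\Pr[X \geq a]$, at which point the result drops out after dividing by $a$.

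First I would introduce the indicator random variable $\mathbf{1}[X \geq a]$, which equals $1$ on the event $\{X \geq a\}$ and $0$ otherwise, and argue the pointwise inequality
\begin{equation}
    a \cdot \mathbf{1}[X \geq a] \leq X.
\end{equation}
To see this, I would split into the two cases determined by the indicator: on $\{X \geq a\}$ the left-hand side equals $a$, which is at most $X$ by the definition of the event; on the complementary event the left-hand side is $0$, which is at most $X$ precisely because $X$ is non-negative. This is the only place the hypothesis $X \geq 0$ enters, and it is the conceptual crux of the argument.

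Next I would take expectations of both sides. By monotonicity of expectation the inequality is preserved, and by linearity together with the identity $\E[\mathbf{1}[X \geq a]] = \Pr[X \geq a]$ the left-hand side becomes $a\,\Pr[X \geq a]$, yielding
\begin{equation}
    a \cdot \Pr[X \geq a] \leq \E[X].
\end{equation}
Finally, since $a > 0$ I would divide through by $a$ to obtain the claimed bound. There is no genuine obstacle here; the entire subtlety is recognising the pointwise inequality and checking it carefully on the complementary event, where non-negativity is essential. An alternative route splits $\E[X]$ into contributions from $\{X < a\}$ and $\{X \geq a\}$ and discards the first (non-negative) piece, but the indicator argument is cleaner and sidesteps any measure-theoretic bookkeeping.
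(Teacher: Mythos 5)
Your proof is correct: the pointwise bound $a \cdot \mathbf{1}[X \geq a] \leq X$, followed by monotonicity and linearity of expectation, is the standard and complete argument for Markov's inequality, and you correctly identify that non-negativity is needed only on the complementary event. The paper states this result as a preliminary without proof (citing it as standard background), so there is no competing argument to compare against; your indicator-based proof fills that gap cleanly.
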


\begin{definition}[Frobenius Norm]
    For a matrix \(A \in \mathbb{R}^{n \times n}\), the Frobenius norm is defined as:
    \begin{equation}
        \|A\|_F = \sqrt{\sum_{i,j} |A_{ij}|^2}
    \end{equation}
\end{definition}

\begin{definition}[Operator Norm]
    For a matrix \(A \in \mathbb{R}^{n \times n}\), the operator norm is defined as:
    \begin{equation}
        \|A\|_2 = \sup_{||x||_2 = 1} ||Ax||_2
    \end{equation}
\end{definition}

\begin{theorem}[Hanson-Wright Inequality]\label{thm:hanson-wright}
    Let $z = (z_1, z_2, \ldots, z_n) \in \mathbb{R}^n$ be a vector of i.i.d. random variables which are $\pm 1$ with equal probability. Let $B$ be a symmetric $n \times n$ matrix. Then, for every $t \geq 2$, we can bound the expectation of the quadratic form $z^T B z$ as follows:
    \begin{equation}
        \E[|z^T B z - \E[z^T B z]|^t] \leq C^t \cdot \max\left\{ \sqrt{t} \cdot \|B\|_F, t \cdot \|B\|_2 \right\}^t
    \end{equation}
    for some constant $C > 0$.
\end{theorem}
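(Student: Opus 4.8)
The plan is to prove this moment bound in three stages — centering, decoupling, and a concentration estimate for a Euclidean norm — with $C$ denoting an absolute constant that may change from line to line, and $t \geq 2$ fixed throughout. First I would center the quadratic form. Since $z_i^2 = 1$ almost surely, the diagonal of $B$ contributes only the deterministic term $\sum_i B_{ii} = \mathrm{tr}(B)$, and indeed $\E[z^T B z] = \mathrm{tr}(B)$. Writing $B'$ for the matrix obtained from $B$ by zeroing its diagonal, we obtain the exact identity $z^T B z - \E[z^T B z] = z^T B' z = \sum_{i \neq j} B'_{ij} z_i z_j$, together with the comparisons $\|B'\|_F \leq \|B\|_F$ and $\|B'\|_2 \leq 2\|B\|_2$ (the latter since $|B_{ii}| = |e_i^T B e_i| \leq \|B\|_2$). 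Thus it suffices to bound $\E[|z^T B' z|^t]$ for the diagonal-free matrix $B'$.

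Second, I would remove the dependence between the two factors by decoupling. A standard decoupling inequality for diagonal-free chaos, applied to the convex function $x \mapsto |x|^t$, gives
\begin{equation}
    \E\left[\left|\sum_{i \neq j} B'_{ij} z_i z_j\right|^t\right] \leq C^t \, \E\left[\left|\sum_{i,j} B'_{ij} z_i z'_j\right|^t\right],
\end{equation}
where $z'$ is an independent copy of $z$. Conditioning on $z$, the inner sum $\sum_j (B' z)_j z'_j$ is a linear Rademacher combination of the fixed vector $B' z$, so the moment form of Khintchine's inequality yields $\E_{z'}[|\sum_j (B'z)_j z'_j|^t] \leq C^t t^{t/2} \|B' z\|_2^t$. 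Taking expectation over $z$ then reduces the problem to controlling $\E_z[\|B'z\|_2^t]$, giving $\E[|z^T B' z|^t] \leq C^t t^{t/2}\, \E_z[\|B'z\|_2^t]$.

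Third, I would estimate $\E_z[\|B'z\|_2^t]^{1/t}$ by splitting it into a mean part and a fluctuation part. The map $z \mapsto \|B' z\|_2$ is convex and $\|B'\|_2$-Lipschitz in the Euclidean metric, so Talagrand's concentration inequality for convex Lipschitz functions on the Boolean cube gives sub-Gaussian deviations of scale $\|B'\|_2$ about the mean. Combined with the exact second moment $\E_z[\|B'z\|_2^2] = \E_z[z^T (B')^T B' z] = \|B'\|_F^2$, this yields $\E_z[\|B'z\|_2^t]^{1/t} \leq \|B'\|_F + C\sqrt{t}\,\|B'\|_2$. Substituting back gives $\E[|z^T B' z|^t]^{1/t} \leq C\sqrt{t}\,\|B'\|_F + C t\,\|B'\|_2 \leq 2C \max\{\sqrt{t}\,\|B'\|_F,\ t\,\|B'\|_2\}$, and the norm comparisons from the centering step convert this into the claimed bound in terms of $\|B\|_F$ and $\|B\|_2$.

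I expect the fluctuation estimate in the third stage to be the main obstacle, as it is precisely what produces the two-term maximum: the $\sqrt{t}\,\|B\|_F$ term arises from the second moment of $\|B'z\|_2$, while the $t\,\|B\|_2$ term arises from the tail of the Lipschitz function, and obtaining the correct scale requires genuine convex concentration rather than a crude bound such as $\|B'z\|_2 \leq \|B'\|_2 \sqrt{n}$. An alternative that avoids invoking Talagrand is to set up a recursion on $\E_z[\|B'z\|_2^t] = \E_z[(z^T (B')^T B' z)^{t/2}]$, reapplying the centering-plus-Khintchine step to the positive semidefinite matrix $(B')^T B'$ at the halved exponent $t/2$ and then summing the resulting geometric recursion; this is self-contained but demands more care to keep the constants uniform in $t$.
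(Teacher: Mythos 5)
The paper does not actually prove this statement: Hanson--Wright is stated as a preliminary (Theorem~\ref{thm:hanson-wright}) and dispatched with a citation to the original 1971 paper, so there is no in-paper argument to compare against. Your sketch is, however, a correct outline of the standard modern proof of the moment form of Hanson--Wright for Rademacher chaos, and each link in your chain checks out. The centering step is exact because $z_i^2 = 1$, and the norm comparisons $\|B'\|_F \le \|B\|_F$ and $\|B'\|_2 \le 2\|B\|_2$ (via $|B_{ii}| \le \|B\|_2$) are right. The decoupling inequality of de la Pe\~na--Montgomery-Smith applies precisely because $B'$ is diagonal-free and $x \mapsto |x|^t$ is convex; conditioning on $z$ and applying Khintchine with its sharp $O(\sqrt{t})$ constant correctly reduces everything to $\E_z[\|B'z\|_2^t]$. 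Your third stage is the genuinely delicate one, and you have identified the right tool: because Rademacher vectors are bounded, Talagrand's concentration inequality for convex Lipschitz functions on the cube gives sub-Gaussian fluctuations of $\|B'z\|_2$ at scale $\|B'\|_2$ about its median (hence mean, up to constants), and combined with $\E\|B'z\|_2 \le (\E\|B'z\|_2^2)^{1/2} = \|B'\|_F$ this yields exactly the two-term bound $\|B'\|_F + C\sqrt{t}\,\|B'\|_2$ whose product with the Khintchine factor $C\sqrt{t}$ produces the $\max\{\sqrt{t}\,\|B\|_F,\ t\,\|B\|_2\}$ structure. This is essentially the Rudelson--Vershynin-style argument specialized to bounded entries, and it is more informative than the paper's black-box citation: it makes visible where each of the two terms in the maximum comes from, and your closing remark correctly flags that a naive bound like $\|B'z\|_2 \le \|B'\|_2\sqrt{n}$ would destroy the dimension-free character of the result. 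The only points that would need care in a full write-up are routine: passing from median to mean in Talagrand costs a constant shift, and the constants must be tracked as absolute (not $t$-dependent), which your recursion-based alternative would also have to handle.
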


{
    \small\textbf{Reference:}~\cite{hasonwright}
}

\section{JL Transforms}\label{sec:jl_transforms}

In this section, I explore three different constructions of Johnson-Lindenstrauss (JL) transforms for dimensionality reduction. Each construction offers a different level of sparsity, but offers the same guarantee on preservation of pairwise distances. Below, I give an explanation of the proof techniques used to establish each construction's JL properties.

\subsection{The Standard JL Transform}\label{sec:standard_jl}

\begin{theorem}[Johnson-Lindenstrauss Transform]\label{thm:jl}
    Let vectors \(x_1, x_2, \ldots, x_n \in \mathbb{R}^d \) be given. Choose \( \epsilon \in (0, 1) \) and let $k=O(\log n / \epsilon^2)$. Construct a random matrix \( R \in \mathbb{R}^{k \times d} \) where each entry \( R_{ij} \) is drawn independently from \( \mathcal{N}(0, 1/k) \). Let $y_i = Rx_i \forall i$. Then, with probability at least \( 1 - \frac{1}{n} \), for all \(i, j \in [n]\):
\begin{equation}\label{eq:jl_guarantees}
\begin{gathered}
(1-\epsilon)\|x_i\|_2^2 \le \|y_i\|_2^2 \le (1+\epsilon)\|x_i\|_2^2\\
(1-\epsilon)\|x_i-x_j\|_2^2 \le \|y_i - y_j\|_2^2 \le (1+\epsilon)\|x_i-x_j\|_2^2
\end{gathered}
\end{equation}
\end{theorem}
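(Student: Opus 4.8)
The plan is to prove the Johnson-Lindenstrauss theorem via a concentration-of-measure argument on a single fixed vector, followed by a union bound over the relevant vectors. Let me sketch the standard approach.

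First, the key reduction: it suffices to show that for any single fixed vector $x \in \mathbb{R}^d$, the random map $R$ preserves its norm up to $(1\pm\epsilon)$ with failure probability at most roughly $e^{-\Omega(k\epsilon^2)}$. This is because we can apply such a bound to each of the $n$ vectors $x_i$ and each of the $\binom{n}{2}$ difference vectors $x_i - x_j$, for a total of fewer than $n^2$ vectors. A union bound then gives overall failure probability at most $n^2 \cdot e^{-\Omega(k\epsilon^2)}$, and choosing $k = O(\log n / \epsilon^2)$ with a large enough constant drives this below $1/n$.

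Let me sketch the plan.

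\begin{proof}[Proof sketch]
By linearity of $R$, the map sends differences to differences: $y_i - y_j = R(x_i - x_j)$. Hence both lines of~\eqref{eq:jl_guarantees} are instances of a single \emph{distributional JL} claim: for any fixed $x \in \mathbb{R}^d$,
\begin{equation}
    \Pr\left[\, \big|\, \|Rx\|_2^2 - \|x\|_2^2 \,\big| > \epsilon \|x\|_2^2 \,\right] \le 2 e^{-c k \epsilon^2}
\end{equation}
for some absolute constant $c > 0$. First I would fix $x$ and, by scaling, assume $\|x\|_2 = 1$. Writing $R$ row by row, let $R_j$ denote the $j$-th row, so that $\|Rx\|_2^2 = \sum_{j=1}^k \langle R_j, x\rangle^2$. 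Since each entry of $R_j$ is $\mathcal{N}(0, 1/k)$ and $\|x\|_2 = 1$, each inner product $\langle R_j, x\rangle$ is distributed as $\mathcal{N}(0, 1/k)$, so $\sqrt{k}\,\langle R_j, x\rangle \sim \mathcal{N}(0,1)$. Therefore $k\|Rx\|_2^2 = \sum_j (\sqrt{k}\langle R_j, x\rangle)^2$ is a sum of $k$ i.i.d.\ squared standard Gaussians, i.e.\ a $\chi^2_k$ random variable, with mean $k$.
\end{proof}

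The main work, and the step I expect to be the principal obstacle, is the tail bound for the $\chi^2_k$ variable. The plan is to apply the Chernoff method: for the upper tail I would bound $\Pr[\chi^2_k \ge (1+\epsilon)k]$ by $\min_{s>0} e^{-s(1+\epsilon)k}\,\E[e^{s\chi^2_k}]$, using the moment generating function $\E[e^{sZ^2}] = (1-2s)^{-1/2}$ for a standard Gaussian $Z$ (valid for $s < 1/2$), raised to the $k$-th power by independence. Optimising over $s$ and using the inequality $\log(1+u) \ge u - u^2/2$ yields a bound of the form $e^{-k(\epsilon^2/4 - \epsilon^3/6)} \le e^{-ck\epsilon^2}$ for $\epsilon \in (0,1)$; the lower tail is handled symmetrically. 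The delicate part is carrying out the optimisation cleanly and retaining an honest constant $c$ so that the final bound survives the union bound. With the per-vector bound in hand, I would finish by the union bound described above, setting $k = C \log n / \epsilon^2$ for a constant $C$ chosen so that $n^2 e^{-ck\epsilon^2} \le 1/n$, which completes the proof.
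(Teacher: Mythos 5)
Your proposal is correct and takes essentially the same route as the paper: reduce to a per-vector distributional JL statement via linearity and scaling, observe that $k\|Rx\|_2^2 \sim \chi^2_k$, apply a chi-squared tail bound of the form $2e^{-\Omega(k\epsilon^2)}$, and finish with a union bound over the $n$ vectors and $O(n^2)$ difference vectors with $k = O(\log n/\epsilon^2)$. The only difference is that you derive the tail bound yourself via the Chernoff/MGF method whereas the paper cites it from a reference --- a self-contained but not genuinely different argument (one small slip: the inequality you need for the upper tail is $\log(1+u) \le u - u^2/2 + u^3/3$, not the reversed bound you wrote, though your final exponent $e^{-k(\epsilon^2/4 - \epsilon^3/6)}$ is the correct consequence).
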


{
\small
\textbf{Reference}:~\cite{OgJL}
}

The proofs of the theorem and lemma below are adapted from~\cite{secondrandomized}.

In essence, JL states that a random projection preserves norms and pairwise distances within a factor of \( (1 \pm \epsilon) \) with high probability. $\epsilon$ controls the desired accuracy of distance preservation and determines the target dimension \(k\). This theorem is often also stated in terms of the existence of a deterministic linear map \( f: \mathbb{R}^d \to \mathbb{R}^k \) that satisfies the same distance-preserving properties. 

The essence of the JL transform comes from the following lemma, which provides a probabilistic guarantee for the preservation of the norm of any unit vector under random projection.

\begin{lemma}[Distributional JL]\label{lemma:distributional_jl}
    Choose $\epsilon, \delta \in (0,1)$. There exists $k= O(\epsilon^{-2}\log(1/\delta))$ such that a random matrix $R \in \mathbb{R}^{k \times d}$ where each entry \( R_{i,j} \) is drawn independently from \( \mathcal{N}(0, 1/k) \) such that for any vector $v \in \mathbb{R}^d$ where $\|v\|_2 = 1$, we have
    \begin{equation}
    \Pr\left[||Rv||_2^2 \in (1 - \epsilon, 1 + \epsilon)\right] \geq 1 - \delta
    \end{equation}
\end{lemma}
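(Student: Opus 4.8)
The plan is to reduce the squared norm $\|Rv\|_2^2$ to a chi-squared random variable and then apply a Chernoff-style concentration bound. First I would observe that for a fixed unit vector $v$, each coordinate of $Rv$ is a linear combination of the independent Gaussian entries of $R$: the $i$-th coordinate is $(Rv)_i = \sum_{j=1}^d R_{ij} v_j$, which is Gaussian with mean $0$ and variance $\frac{1}{k}\sum_j v_j^2 = \frac{1}{k}$ since $\|v\|_2 = 1$. Writing $Z_i = \sqrt{k}\,(Rv)_i$, the $Z_i$ are i.i.d. standard normal, and hence $k\|Rv\|_2^2 = \sum_{i=1}^k Z_i^2$ follows a chi-squared distribution with $k$ degrees of freedom. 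In particular $\E[\|Rv\|_2^2] = 1$, so the projection is unbiased and the task reduces to one of concentration.

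Second, I would control the two tails $\Pr[\sum_i Z_i^2 \geq k(1+\epsilon)]$ and $\Pr[\sum_i Z_i^2 \leq k(1-\epsilon)]$ separately. For each I would use the exponential Markov (Chernoff) method together with the moment generating function of a squared standard normal, $\E[e^{sZ_i^2}] = (1-2s)^{-1/2}$, valid for $s < 1/2$. By independence the MGF of the sum factorises, so for the upper tail this yields, for any admissible $s > 0$, $\Pr[\sum_i Z_i^2 \geq k(1+\epsilon)] \leq (1-2s)^{-k/2} e^{-sk(1+\epsilon)}$, and symmetrically with $s < 0$ for the lower tail.

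Third, I would optimise over $s$ and expand the resulting logarithms. The hard part will be this step: choosing the near-optimal $s$ (for the upper tail, $s = \tfrac{\epsilon}{2(1+\epsilon)}$) and carefully Taylor-expanding $\log(1 \pm \epsilon)$ and $\log(1-2s)$ to show that both tail probabilities are bounded by $\exp(-c k \epsilon^2)$ for an absolute constant $c > 0$, uniformly over $\epsilon \in (0,1)$; keeping track of the cubic error terms of order $\epsilon^3$ without losing the quadratic rate is where the calculation must be done with care.

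Finally, a union bound over the two tails gives $\Pr[\,|\,\|Rv\|_2^2 - 1\,| > \epsilon\,] \leq 2\exp(-c k \epsilon^2)$. Setting the right-hand side to at most $\delta$ and solving for $k$ yields $k \geq \frac{1}{c\epsilon^2}\log(2/\delta)$, which is $O(\epsilon^{-2}\log(1/\delta))$, establishing the claim.
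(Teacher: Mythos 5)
Your proposal is correct, and it follows the same core reduction as the paper: each coordinate of $Rv$ is an independent $\mathcal{N}(0,1/k)$ variable, so $k\|Rv\|_2^2$ is chi-squared with $k$ degrees of freedom, and the lemma reduces to a two-sided tail bound for that distribution. Where you diverge is in how that tail bound is obtained. The paper treats it as a black box, citing $\Pr[X \notin (1-\epsilon,1+\epsilon)k] \leq 2\exp(-k\epsilon^2/8)$ from a textbook reference, whereas you derive it yourself by the Chernoff/MGF method: using $\E[e^{sZ_i^2}] = (1-2s)^{-1/2}$, factorising over independence, optimising $s = \tfrac{\epsilon}{2(1+\epsilon)}$ for the upper tail (and a negative exponent for the lower tail), and Taylor-expanding the logarithms. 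The step you flag as delicate does go through: the upper tail becomes $\exp\left(\tfrac{k}{2}(\log(1+\epsilon)-\epsilon)\right) \leq \exp(-k\epsilon^2/12)$ via $\log(1+x) \leq x - x^2/2 + x^3/3$, and the lower tail gives an even stronger constant, so your claimed $2\exp(-ck\epsilon^2)$ holds with an explicit $c$. What your route buys is self-containedness, at the cost of redoing a computation the paper prefers to cite; notably, the MGF machinery you invoke is precisely what the paper deploys later in its proof of Theorem~\ref{thm:jl_binary_coins} for the discrete distributions (there with the bound $\E[\exp(h(R_jv)^2)] \leq (1-2h)^{-1/2}$ playing the role of your Gaussian MGF), so your argument effectively unifies the Gaussian case with the sparse case under one technique.
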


\begin{proof}[Sketch of Proof of Theorem~\ref{thm:jl}]
    From Lemma~\ref{lemma:distributional_jl}, we can quickly derive Theorem~\ref{thm:jl}. 
    
    First, note that if 
    $||Rv||_2^2 \in (1 - \epsilon, 1 + \epsilon)$ for a unit vector $v\in\mathbb{R}^d$, then for any vector $u = cv$ for some $c\in \mathbb{R}$, we have
    \[
    ||Ru||_2^2 \in (1 - \epsilon)||u||_2^2, (1 + \epsilon)||u||_2^2
    \]
    That is, if the unit sphere's norm is preserved within a factor of \( (1 \pm \epsilon) \), then the norm of any vector $u$ is preserved by the same multiplicative factor.

    Secondly, there are $O(n^2)$ pairs of points among the $n$ input vectors. By applying a union bound over all $n$ vectors and pairs of vectors, we can ensure that with probability at least \( 1 - \frac{1}{n} \) that the distance between every pair of points is preserved within $(1\pm \epsilon)$. Letting $\delta = 1/n^3$ in Lemma~\ref{lemma:distributional_jl} and applying the union bound over each individual vector and each pair of vectors yields the result. 
\end{proof}

In fact, any distance preservation with a polynomial number of combinations of vectors can be preserved with high probability using the same technique. If there are $O(n^c)$ combinations of vectors, we can set $\delta = 1/n^{c+1}$ in Lemma~\ref{lemma:distributional_jl} and apply a union bound to ensure that all combinations are preserved with probability at least $1 - 1/n$. Because $\delta$ only appears inside a logarithm in the required target dimension $k$, this only increases the required target dimension by a constant factor; $k$ remains $O(\log n / \epsilon^2)$.


\begin{proof}[Sketch of Proof of Lemma~\ref{lemma:distributional_jl}]

    Each entry of $Rv$ is a dot product between a row of $R$ and the vector $v$. Since each entry of $R$ is drawn from $\mathcal{N}(0, 1/k)$, each entry of $Rv$ is drawn from $\mathcal{N}(0, ||v||_2^2/k) = \mathcal{N}(0, 1/k)$ (since we assumed $||v||_2 = 1$). Thus, each entry of $Rv$ is an independent Gaussian random variable with mean 0 and variance $1/k$.

    Therefore, the squared norm $||Rv||_2^2$ is the sum of $k$ independent random variables, each distributed as $\mathcal{N}(0, 1/k)^2$. Each such variable has expectation $1/k$, so the expectation of $||Rv||_2^2$ is 1. We now wish to show that $||Rv||_2^2$ concentrates around its expectation.

    We note that the sum of $k$ independent $\mathcal{N}(0, 1)^2$ random variables follows a chi-squared distribution with $k$ degrees of freedom: $\chi_k$. Further noting that scaling a random variable by a constant $c$ scales its variance by $c^2$, we see that $||Rv||_2^2\sim \frac{1}{k}\chi_k$. Hence $k ||Rv||_2^2\sim \chi_k$.

    We use a bound on the probability that $X\sim\chi_k$ deviates from its mean. For any $\epsilon \in (0,1)$,
    \begin{equation}
        \Pr[X \notin (1 - \epsilon, 1 + \epsilon)k] \leq 2\exp(-k \epsilon^2 / 8)
    \end{equation}
    
    {\small
    \textbf{Reference:} \cite{UnderstandingMLShai}, B.7
    }

    We can substitute $X = k ||Rv||_2^2$ to get
    \begin{equation}
    \Pr[||Rv||_2^2 \notin (1 - \epsilon, 1 + \epsilon)] \leq 2\exp(-k \epsilon^2 / 8)
    \end{equation}
    Setting the right-hand side to be at most $\delta$ and solving for $k$ yields Lemma~\ref{lemma:distributional_jl}.
\end{proof}

We see that JL works by averaging multiple independent estimates of the norm of a vector, leading to concentration. With enough samples (i.e., a large enough target dimension $k$), we can ensure that the norm is preserved with high probability. The required target dimension $k$ only grows logarithmically with $n$, making JL useful even when $n$ is large.

We also note that $k$ is agnostic to the original dimension $d$. Thus, JL can be useful for both large $n$ and large $d$.

However, the standard JL transform has a drawback: the random matrix \( R \) is dense, meaning that every entry is non-zero almost surely. A dense matrix is inevitable when each entry is drawn from a Gaussian distribution. The proof above relies on independent Gaussians in order to characterise the distribution of $||Rv||_2^2$ as a chi-squared distribution, from which we derive concentration bounds. To be able to use other distributions for the entries of \( R \), we need different proof techniques.

\subsection{A Discrete, Sparse Matrix}\label{sec:achlioptas}

{\small
\textbf{Reference}:~\cite{binarycoins}
}

To achieve a sparse matrix $R$, we must have positive probability of zero entries. This means we cannot use continuous distributions like the Gaussian.~\citet{binarycoins} proposes two different distributions for each entry of $R$.

\begin{minipage}{0.45\linewidth}
\begin{equation}\label{eq:binarycoins}
R_{ij} =
\begin{cases}
1  & \text{with probability } 1/2 \\
-1 & \text{with probability } 1/2
\end{cases}
\end{equation}
\end{minipage}
\hfill
\begin{minipage}{0.49\linewidth}
\begin{equation}\label{eq:sparsebinarycoins}
R_{ij} = \sqrt{3}\times
\begin{cases}
+1 & \text{with probability } 1/6 \\
0  & \text{with probability } 2/3 \\
-1 & \text{with probability } 1/6
\end{cases}
\end{equation}
\end{minipage}

Equation~\ref{eq:binarycoins} is a discrete distribution akin to flipping coins. This is what gives the paper its name: `Johnson-Lindenstrauss with binary coins'. Equation~\ref{eq:sparsebinarycoins} gives a sparse distribution, where two-thirds of the entries are zero in expectation. Both distributions have mean 0 and variance 1 and yield JL transforms with identical guarantees. The proof of both follows a nearly identical structure.

\begin{theorem}[Sparse JL]\label{thm:jl_binary_coins}
    Given a set of vectors \(x_1, x_2, \ldots, x_n \in \mathbb{R}^d \) and $\epsilon \in (0,1)$, there exists $k=O(\log n / \epsilon^2)$ such that for a random matrix $R \in \mathbb{R}^{k \times d}$ with every entry of $R$ drawn independently from distribution~\ref{eq:binarycoins} or~\ref{eq:sparsebinarycoins}, and setting $y_i = \frac{1}{\sqrt{k}} R x_i$, then with probability at least \( 1 - \frac{1}{n} \), for all \(i, j \in [n]\), the JL guarantee~\ref{eq:jl_guarantees} is satisfied.
\end{theorem}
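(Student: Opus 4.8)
reduce the theorem to a distributional JL statement for a single unit vector (exactly as in the proof sketch of Theorem~\ref{thm:jl}), and then prove concentration of $\|\frac{1}{\sqrt{k}}Rv\|_2^2$ around $1$ for the discrete distributions~\ref{eq:binarycoins} and~\ref{eq:sparsebinarycoins}. The reduction step is identical to the Gaussian case and uses only that the map is linear: scaling and the union bound over the $O(n^2)$ pairs are agnostic to the entry distribution, so setting $\delta = 1/n^3$ and invoking a distributional guarantee $\Pr[\|\frac{1}{\sqrt{k}}Rv\|_2^2 \notin (1-\epsilon,1+\epsilon)] \le \delta$ for each unit vector closes the argument. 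The real content is therefore establishing this single-vector concentration without Gaussianity.

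Let me think about the single-vector bound. Write $Q = \|\frac{1}{\sqrt{k}}Rv\|_2^2 = \frac{1}{k}\sum_{i=1}^k \langle R_i, v\rangle^2$, where $R_i$ is the $i$th row. Each term is $S_i = (\sum_{j=1}^d R_{ij} v_j)^2$. First I would verify $\E[Q] = 1$: since the $R_{ij}$ are independent with mean $0$ and variance $1$, we get $\E[\langle R_i, v\rangle^2] = \sum_j v_j^2 = \|v\|_2^2 = 1$, so $\E[Q]=1$ for both distributions. The $S_i$ are i.i.d.\ across rows, so $Q$ is an average of $k$ i.i.d.\ nonnegative terms, and I want a Chernoff-type tail bound showing $\Pr[|Q-1| \ge \epsilon] \le 2\exp(-c k \epsilon^2)$.

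The cleanest route is the moment generating function / Laplace transform method. I would bound $\E[\exp(\lambda S_i)]$ for small $\lambda$, where $S_i = (\sum_j R_{ij} v_j)^2$. Because the entries are bounded (they take values in $\{-1,+1\}$ or $\{-\sqrt3,0,+\sqrt3\}$) and have all the right moments matching a standard Gaussian up to second order, the sub-exponential behavior of $S_i$ should match the $\chi^2$ case up to constants; the key technical input is a sub-Gaussian tail for the linear form $\langle R_i,v\rangle$, which follows from Hoeffding's lemma applied to the bounded independent summands $R_{ij}v_j$, giving $\E[\exp(s\langle R_i,v\rangle)] \le \exp(Cs^2\|v\|_2^2/2)$. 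Squaring a sub-Gaussian random variable yields a sub-exponential one, so $S_i$ concentrates like a $\chi^2$ summand and the average over $k$ rows gives the desired exponential bound after optimizing $\lambda$. Alternatively, the Hanson--Wright inequality (Theorem~\ref{thm:hanson-wright}) gives this directly: stacking the $k$ rows into a single $\pm1$ vector $z$ of length $kd$ (for distribution~\ref{eq:binarycoins}), the quadratic form $z^TBz$ with an appropriate block-diagonal $B$ equals $kQ$, and controlling $\|B\|_F$ and $\|B\|_2$ yields moment bounds that convert, via Markov's inequality on a high enough moment $t$, into the tail bound.

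\textbf{The main obstacle} is handling distribution~\ref{eq:sparsebinarycoins}, whose entries are not $\pm1$ and so do not directly satisfy the hypotheses of the Hanson--Wright inequality as stated (which requires Rademacher entries). For the $\pm1$ distribution the Hanson--Wright route is immediate, but the sparse distribution requires either reproving the moment bound for three-valued symmetric variables or falling back on the direct MGF computation, where the sub-Gaussian constant picks up the factor of $\sqrt3$ and the $2/3$ mass at zero. I expect the bounded-MGF/Chernoff approach to be the more uniform tool since it treats both distributions simultaneously: the only distribution-dependent quantity is the sub-Gaussian parameter of $\langle R_i,v\rangle$, which stays an absolute constant in both cases because each entry is bounded and has variance $1$. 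Verifying that this constant does not degrade as the vector $v$ becomes concentrated on few coordinates---the worst case for sparse matrices---is the point that deserves the most care, and I would check it explicitly before invoking the generic sub-exponential averaging bound.
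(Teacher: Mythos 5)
Your proposal is correct, and its top-level skeleton --- reduce to a single unit vector via the union bound, then run a Chernoff/MGF argument on $\sum_j (R_j v)^2$ --- is exactly the paper's. Where you genuinely diverge is in how the MGF of each squared row projection is controlled. The paper follows Achlioptas: Lemma~\ref{lemma:binary_coins_moment_bounds} shows that all even moments of $(R_j v)$ are dominated by those of a standard Gaussian (via the ``worst-case vector'' argument), giving the exact Gaussian bound $\E[\exp(h (R_j v)^2)] \le (1-2h)^{-1/2}$ for the upper tail, plus the fourth-moment bound $\E[(R_j v)^4] \le 3$ that drives a separate Taylor-expansion argument for the lower tail. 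You instead obtain sub-Gaussianity of the linear form $(R_j v)$ from Hoeffding's lemma (using only that the entries are bounded, mean zero), then invoke the standard fact that the square of a sub-Gaussian variable is sub-exponential, and finish both tails at once with a Bernstein-type bound. Both routes are valid and give $k = O(\epsilon^{-2}\log(1/\delta))$. Yours is more robust: it treats distributions~\ref{eq:binarycoins} and~\ref{eq:sparsebinarycoins} uniformly and extends to any bounded, mean-zero, unit-variance entry distribution, with no worst-case-vector analysis. The paper's moment-domination route buys sharp constants --- Achlioptas's actual point was that the discrete distributions are never worse than the Gaussian, which your Hoeffding bound (carrying a $\sqrt{3}$ factor for distribution~\ref{eq:sparsebinarycoins}) cannot exhibit. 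Two smaller remarks: the concern you flag at the end about vectors concentrated on few coordinates is automatically a non-issue in your approach, since Hoeffding's bound depends on $v$ only through $\|v\|_2^2$; and your assessment of the Hanson--Wright alternative is accurate --- it applies cleanly to distribution~\ref{eq:binarycoins} (take $B = I_k \otimes vv^T$, so $\|B\|_F = \sqrt{k}$ and $\|B\|_2 = 1$), but not, as stated in Theorem~\ref{thm:hanson-wright}, to the three-valued entries of distribution~\ref{eq:sparsebinarycoins}.
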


The proof uses the standard technique of quantifying the probability of deviation from the mean via exponentiation and Markov's inequality. We first require a lemma which bounds expressions related to each entry of the output vector. 

\begin{lemma}\label{lemma:binary_coins_moment_bounds}

Let $R_j$ denote the $j$-th row of $R$ with entries independently drawn from either distribution~\ref{eq:binarycoins} or~\ref{eq:sparsebinarycoins}. For all $h \in [0, 1/2)$ and any unit vector $v$:
\begin{equation}\label{eq:binarycoins_moment_bounds}
    \E[\exp(h(R_j v)^2)] \leq \frac{1}{\sqrt{1 - 2h}}
\end{equation}
\begin{equation}\label{eq:binarycoins_fourth_moment}
    \E[(R_j v)^4] \leq 3
\end{equation}
\end{lemma}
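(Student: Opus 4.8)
The plan is to treat the scalar $S := R_j v = \sum_{i=1}^d R_{ji} v_i$ as a weighted sum of independent, mean-zero, unit-variance random variables — both distributions~\ref{eq:binarycoins} and~\ref{eq:sparsebinarycoins} have mean $0$ and variance $1$ — and to reduce the exponential moment in~\ref{eq:binarycoins_moment_bounds} to a per-coordinate sub-Gaussian estimate. The key device is the Gaussian integral identity $\exp(h S^2) = \E_{g}[\exp(\sqrt{2h}\,g\,S)]$, where $g \sim \mathcal{N}(0,1)$ and $h \in [0, 1/2)$. Taking the expectation over $S$, swapping the order of integration (justified by nonnegativity of the integrand), and using independence of the coordinates, the inner expectation factorises as $\E_S[\exp(\sqrt{2h}\,g\,S)] = \prod_{i} \E[\exp(\sqrt{2h}\,g\,v_i R_{ji})]$; a clean sub-Gaussian bound on each factor will let the outer Gaussian integral collapse to exactly $(1-2h)^{-1/2}$.

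So the first and central step is to prove the sharp per-coordinate bound $\E[\exp(t R_{ji})] \le \exp(t^2/2)$ for every $t \in \R$. For the Rademacher law~\ref{eq:binarycoins} this is the classical estimate $\cosh t = \sum_{m\ge0} t^{2m}/(2m)! \le \sum_{m\ge0} t^{2m}/(2^m m!) = \exp(t^2/2)$, using $(2m)! \ge 2^m m!$. For the sparse law~\ref{eq:sparsebinarycoins} the moment generating function is $\tfrac{2}{3} + \tfrac{1}{3}\cosh(\sqrt{3}\,t)$, and I would again compare Taylor coefficients against $\exp(t^2/2)$: the $m=0,1$ terms match automatically (they encode the mean and variance), and for $m \ge 2$ the comparison reduces to the elementary inequality $3^{\,m-1}\,2^m\,m! \le (2m)!$, which holds with equality at $m=2$ — reflecting that this law was calibrated to match the Gaussian fourth moment — and strictly for $m \ge 3$.

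Given the per-coordinate bound, the remaining algebra is routine: multiplying over coordinates gives $\prod_i \E[\exp(\lambda v_i R_{ji})] \le \prod_i \exp(\lambda^2 v_i^2 / 2) = \exp(\lambda^2/2)$ since $\|v\|_2^2 = 1$, so with $\lambda = \sqrt{2h}\,g$ the inner expectation is at most $\exp(h g^2)$, and $\E_g[\exp(h g^2)] = (1-2h)^{-1/2}$ closes~\ref{eq:binarycoins_moment_bounds}. For the fourth-moment bound~\ref{eq:binarycoins_fourth_moment} I would expand $S^4$ directly: by independence and the vanishing of odd moments, only the pure fourth powers and the two-pair terms survive, yielding
\[
\E[S^4] = \E[R_{ji}^4]\sum_i v_i^4 + 3\sum_{i \ne l} v_i^2 v_l^2 = 3 + \bigl(\E[R_{ji}^4] - 3\bigr)\sum_i v_i^4,
\]
where I used $\sum_{i \ne l} v_i^2 v_l^2 = \bigl(\sum_i v_i^2\bigr)^2 - \sum_i v_i^4 = 1 - \sum_i v_i^4$. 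Since $\E[R_{ji}^4] = 1$ for~\ref{eq:binarycoins} and $\E[R_{ji}^4] = 3$ for~\ref{eq:sparsebinarycoins}, the correction term is nonpositive in both cases and $\E[S^4] \le 3$ follows.

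I expect the main obstacle to be the per-coordinate sub-Gaussian verification for the scaled ternary law~\ref{eq:sparsebinarycoins}: unlike the Rademacher case, where domination by the Gaussian moment generating function is immediate, here one must confirm that the $\sqrt{3}$ scaling does not push any Taylor coefficient above the Gaussian benchmark, and it is precisely the factorial growth of $(2m)!$ that rescues the inequality for large $m$. Everything downstream — the Fubini swap and the two Gaussian integrals — is mechanical once this bound and the moment computations are in hand.
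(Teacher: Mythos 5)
Your proposal is correct, but it follows a genuinely different route from the paper's. The paper (following \citet{binarycoins}) first establishes even-moment domination $\E[(R_j v)^{2m}] \le \E[G^{2m}]$ for every $m \ge 1$, where $G \sim \mathcal{N}(0,1)$; that step rests on a worst-case-vector argument (the extremal unit vector has all coordinates of equal magnitude) plus an induction, which the paper only sketches and defers to the original reference. Both parts of the lemma then fall out at once: inequality~\ref{eq:binarycoins_fourth_moment} is the $m=2$ case, and inequality~\ref{eq:binarycoins_moment_bounds} follows by comparing the Taylor expansion of $\E[\exp(h(R_j v)^2)]$ term by term against $\E[\exp(hG^2)] = (1-2h)^{-1/2}$. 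You instead bypass the moment-domination machinery entirely: the Gaussian linearization $\exp(hS^2) = \E_g[\exp(\sqrt{2h}\,gS)]$ together with Tonelli's theorem reduces the problem to the scalar sub-Gaussian bound $\E[\exp(tR_{ji})] \le \exp(t^2/2)$, which you verify coefficient-by-coefficient (the inequality $3^{m-1}\,2^m\,m! \le (2m)!$ does hold for all $m \ge 1$, with equality at $m=1,2$ and strictly beyond, as an easy ratio check confirms); the fourth moment you handle by direct expansion, where only the diagonal and two-pair terms survive and the correction $(\E[R_{ji}^4]-3)\sum_i v_i^4$ is nonpositive for both laws. Your route buys self-containedness and generality: the only distribution-specific work is a one-dimensional moment-generating-function estimate, so the identical argument covers any mean-zero, unit-variance entry distribution that is sub-Gaussian with parameter $1$, with no appeal to an extremal-vector reduction. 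The paper's route buys a stronger intermediate conclusion --- domination of \emph{all} even moments, not just of the exponential moment --- which is what \citet{binarycoins} exploit to show the discrete constructions achieve constants at least as good as the Gaussian; the price is that the extremal-vector step is the hard part and is not actually proved in this paper. Both arguments correctly locate the ternary law's tightness at the fourth moment: your equality at $m=2$ is the same fact as $\E[R_{ji}^4] = 3 = \E[G^4]$.
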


Using this lemma, we can prove Theorem~\ref{thm:jl_binary_coins}.

\begin{proof}[Proof of Theorem~\ref{thm:jl_binary_coins}]
    As before with Lemma~\ref{lemma:distributional_jl},it suffices to prove that the guarantee holds for any vector $v \in \mathbb{R}^d$ with $\|v\|_2 = 1$, with probability at least $1 - \delta$ for some $\delta \in (0,1)$, where the target dimension scales logarithmically with $1/\delta$. The full theorem then follows by applying a union bound over all vectors and pairs of vectors, as in the proof of Theorem~\ref{thm:jl}. 
    
    Thus, we focus on providing an analogue of Lemma~\ref{lemma:distributional_jl} for this proof. To simplify notation, let the sum $S = \sum_{j=1}^k (R_j v)^2$. Note that $\E[S] = k$ since $\E[(R_j v)^2] = 1$ for all $j$.
    An analogue to Lemma~\ref{lemma:distributional_jl} must show that for any $\epsilon \in (0,1)$, 
    \begin{equation}
        \Pr[S \notin (1 - \epsilon)k, (1 + \epsilon)k] \leq \delta
    \end{equation}
     for some $\delta$ that decreases exponentially with $k \epsilon^2$.
    

    First, we bound $\Pr[S > (1 + \epsilon)k]$. By Markov's inequality on the random variable $e^{h S}$ for any $h > 0$, we have
    \begin{equation}
        \Pr[S > (1 + \epsilon)k] = \Pr[e^{h S} > e^{h (1 + \epsilon)k}] \leq \E[e^{h S}] e^{-h (1 + \epsilon)k}
    \end{equation}

    Since the rows of $R$ are i.i.d., we have
    \begin{equation}
        \E[e^{h S}] = \prod_{j=1}^k \E[\exp(h (R_j v)^2)] = \left( \E[\exp( h (R_1 v)^2)] \right)^k
    \end{equation}

    Using the inequality from ~\ref{eq:binarycoins_moment_bounds}, we have
    \begin{equation}
        \Pr[S > (1 + \epsilon)k] \leq \left(\frac{1}{\sqrt{1 - 2h}} \right)^k e^{-h (1 + \epsilon)k}
    \end{equation}

    Setting $h = \frac{1}{2} \frac{\epsilon}{1 + \epsilon} < \frac{1}{2}$ minimizes the right-hand side, yielding
    \begin{equation}
        \Pr[S > (1 + \epsilon)k] \leq ((1 + \epsilon) e^{-\epsilon})^{k/2} = \exp\left(\frac{k}{2}(\log(1 + \epsilon) - \epsilon )\right)
    \end{equation}

    Finally, using the inequality $\log(1 + x) \leq x - x^2/2 + x^3/3$ from the Taylor expansion of $\log(1+x)$ for $|x| < 1$ and using that for $\epsilon\in (0,1)$ we have $-\epsilon^2 (1/2 - \epsilon/3) \leq -\epsilon^2/6$, we get
    \begin{equation}
        \Pr[S > (1 + \epsilon)k] \leq \exp(-k \epsilon^2 / 12)
    \end{equation}

    Up to constants inside the exponential, this is the same bound as in the distributional JL lemma~\ref{lemma:distributional_jl}.

    Next, we bound $\Pr[S < (1 - \epsilon)k]$. Similar to before, by Markov's inequality, for any $h > 0$:

    \begin{equation}
        \Pr[S < (1 - \epsilon)k] \leq (\E[\exp(-h (R_1 v)^2)])^k \exp({h (1 - \epsilon)k})
    \end{equation}

    Using the 2nd order Taylor expansion of $\exp(-h (R_1 v)^2)$ around $h=0$, the fact that $\E[(R_1 v)^2] = 1$, and the bound from~\ref{eq:binarycoins_fourth_moment}, we have
    \begin{equation}
    \begin{gathered}
            \E[\exp(-h (R_1 v)^2)] \le \E\left[1 - h (R_1 v)^2 + \frac{h^2 (R_1 v)^4}{2}\right]\\
            \le 1 - h + \frac{3 h^2}{2} \le \exp(-h + \frac{3 h^2}{2})
    \end{gathered}  
    \end{equation}

    Substituting this back into the bound on $\Pr[S < (1 - \epsilon)k]$, we get
    \begin{equation}
    \begin{gathered}
        \Pr[S < (1 - \epsilon)k] \leq \left(\exp\left(-h + \frac{3 h^2}{2}\right)\right)^k \exp(h (1 - \epsilon)k)\\
        = \exp\left(k h \left( \frac{3 h}{2} - \epsilon \right) \right)
    \end{gathered} 
    \end{equation}

    Setting $h = \epsilon / 3$ minimizes the right-hand side, giving
    \begin{equation}
        \Pr[S < (1 - \epsilon)k] \leq \exp(-k \epsilon^2 / 6)
    \end{equation}

    Combining both bounds, we have
    \begin{equation}
        \Pr[S \notin (1 - \epsilon)k, (1 + \epsilon)k] \leq \exp(-k \epsilon^2 / 12) + \exp(-k \epsilon^2 / 6) \leq 2 \exp(-k \epsilon^2 / 12)
    \end{equation}

    Thus, $S$ concentrates around its mean with the same exponential dependence on $k \epsilon^2$ as in the distributional JL lemma~\ref{lemma:distributional_jl}. 
\end{proof}

This proof shows that we can achieve JL guarantees using discrete and sparse distributions for the entries of the random matrix \( R \). The key insight is that the moment bounds from Lemma~\ref{lemma:binary_coins_moment_bounds} allow us to control the tail behavior of the sum of squared projections, enabling the application of Markov's inequality with tight tails. I now summarise the key ideas behind the proof of the moment bounds in Lemma~\ref{lemma:binary_coins_moment_bounds}.

\begin{proof}[Intuition for Proof of Lemma~\ref{lemma:binary_coins_moment_bounds}]
Both inequalities rely on comparing to the corresponding expressions using a Gaussian random variable. In particular, we require that all even moments of the discrete distributions are at most those of a Gaussian random variable with the same variance. That is, let $G \sim \mathcal{N}(0,1)$ be a standard normal random variable. We require that for all $m \geq 1$ and any unit vector $v$:

\begin{equation}\label{eq:even_moment_comparison}
    \E[(R_j v)^{2m}] \leq \E[G^{2m}] \text{ for all } m \geq 1
\end{equation}

\citet{binarycoins} show this by demonstrating that the `worst case' vector for the discrete distributions~\ref{eq:binarycoins} and~\ref{eq:sparsebinarycoins} is a vector with all entries equal in magnitude. This can be intuitively understood as spreading out the randomness as much as possible to maximise higher moments. The variance is fixed, but the higher moments can be increased by spreading out the mass.

They then show that~\ref{eq:even_moment_comparison} boils down to showing that the even moments of distributions~\ref{eq:binarycoins} and~\ref{eq:sparsebinarycoins} on the worst-case vectors are at most those of a Gaussian random variable. This is trivially satisfied by distribution~\ref{eq:binarycoins}, since all odd moments are zero and all even moments are 1, and can be shown for distribution~\ref{eq:sparsebinarycoins} by induction. For more details, see~\cite{binarycoins}.

From here, we can prove inequality~\ref{eq:binarycoins_fourth_moment} using~\ref{eq:even_moment_comparison} with $m=2$ to find $\E[(R_j v)^4] \leq \E[G^4] = 3$.

For inequality~\ref{eq:binarycoins_moment_bounds}, we upper bound $\E[\exp(h (R_j v)^2)]$ by its Taylor expansion, and using the even moment comparison from~\ref{eq:even_moment_comparison} to bound each term of the expansion by the corresponding term for a Gaussian random variable. That is,

\begin{equation}
    \E[\exp(h (R_j v)^2)] = \sum_{m=0}^{\infty} \frac{h^m}{m!} \E[(R_j v)^{2m}] \leq \sum_{m=0}^{\infty} \frac{h^m}{m!} \E[G^{2m}] = \E[\exp(h G^2)]
\end{equation}

By evaluating the integral we find $\E[\exp(h G^2)] = \frac{1}{\sqrt{1 - 2h}}$, yielding inequality~\ref{eq:binarycoins_moment_bounds}. 
    
\end{proof}

This shows that any distribution with even moments bounded by those of a Gaussian random variable can be used to construct a JL transform using the same proof technique. Therefore, in a sense, the Gaussian distribution itself is the `worst-case' distribution for JL under this proof technique.~\citet{binarycoins} use a more detailed derivation to show that the constants are indeed better for the discrete distributions than for the Gaussian distribution. The asymptotics, however, remain the same.

\subsection{A Sparser Matrix}\label{sec:sparser}

{\small
\textbf{Reference}: \cite{sparser}
}

The discrete distribution~\ref{eq:sparsebinarycoins} yields a sparse JL-transform with two-thirds of the entries zero in expectation. Intuitively, it should be possible to have the number of non-zero entries be dependent on the distortion accuracy $\epsilon$. The larger the allowed error, the fewer non-zero entries should be required to achieve the JL guarantee.

\citet{sparser} provide two similar constructions of JL transforms which satisfy this intuition. The proof techniques for both constructions are nearly identical. Further, they provide two different analyses which yield different asymptotic bounds on the number of non-zero entries. Here, I only provide details for the `graph construction' with the simpler analysis.

\begin{definition}[Graph construction]\label{def:graph_construction}
Let $R \in \mathbb{R}^{k \times d}$. For each column $i \in [d]$, choose a subset of rows
$S_i \subset [k]$ uniformly at random with $|S_i| = s$.
Set
\[
R_{r,i} =
\begin{cases}
\pm \tfrac{1}{\sqrt{s}} & \text{if } r \in S_i,\\
0 & \text{otherwise},
\end{cases}
\]
where the signs are independent and $\pm 1$ with probability 1/2. Intuitively, each column of $R$ has exactly $s$ non-zero entries. Each column's non-zero entries are chosen by selecting $s$ rows uniformly at random without replacement. Each non-zero entry is $\pm 1/\sqrt{s}$ with equal probability.
\end{definition}

As before, to prove a JL-style theorem, we need only prove that the norm is not distorted too much for any single vector $v$ with $\|v\|_2 = 1$ and apply the union bound to get the full JL guarantee for a set of $n$ vectors.

\begin{theorem}[Sparser JL Theorem]\label{thm:sparser_jl}
Let $R$ be constructed using the graph construction~\ref{def:graph_construction} with $k=O(\epsilon^{-2}\log(1/\delta))$ and $s = O(\epsilon^{-1} \log(1/\delta)\log(d/\delta))$. Then, for any unit vector $v \in \mathbb{R}^d$,
\begin{equation}
    \Pr[||Rv||_2^2 \notin (1 - \epsilon, 1 + \epsilon)] \leq \delta
\end{equation}
\end{theorem}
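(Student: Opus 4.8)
The plan is to analyze the random variable $\|Rv\|_2^2$ by decomposing it into a sum over rows and carefully separating its diagonal and off-diagonal contributions, then applying the Hanson-Wright inequality (Theorem~\ref{thm:hanson-wright}) to control the off-diagonal fluctuations. First I would write $\|Rv\|_2^2 = \sum_{r=1}^k (R_r v)^2$ where $R_r$ is the $r$-th row. Fixing the randomness of which rows are selected (the sets $S_i$), each row entry carries an independent random sign, so I would express $\|Rv\|_2^2$ as a quadratic form $\sigma^T B \sigma$ in the sign vector $\sigma$, where $\sigma$ collects all the $\pm 1$ signs and $B$ is a matrix determined by $v$ and the selected supports. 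The diagonal of this quadratic form contributes $\sum_{r,i} R_{r,i}^2 v_i^2 = \sum_i v_i^2 = 1$ deterministically (since each column has exactly $s$ nonzeros, each of squared magnitude $1/s$), so the expectation is correct and all fluctuation lives in the off-diagonal terms.

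The key step is then to bound $\|B\|_F$ and $\|B\|_2$ for the off-diagonal part $B$, conditioned on the support choice, and feed these into Hanson-Wright to obtain a high-probability bound on $|\sigma^T B \sigma - \E[\sigma^T B \sigma]|$. I would compute the $t$-th moment bound from Theorem~\ref{thm:hanson-wright}, then convert it to a tail bound via Markov's inequality applied to $|\sigma^T B \sigma - 1|^t$ with $t$ optimized as a function of $\epsilon$ and $\delta$ (roughly $t \approx \log(1/\delta)$). The Frobenius and operator norms of $B$ depend on how the nonzeros of different columns collide in the same rows: a collision between columns $i$ and $i'$ in row $r$ contributes a term proportional to $v_i v_{i'}/s$ to the off-diagonal. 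The goal is to show that, with the chosen $s$, these norms are small enough that the $\sqrt{t}\,\|B\|_F$ and $t\,\|B\|_2$ terms are both $O(\epsilon)$ with probability at least $1-\delta$.

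The main obstacle will be controlling $\|B\|_F$ and $\|B\|_2$, since these are themselves random quantities depending on the random supports $S_i$, and the analysis must handle the event that too many columns collide in the same rows. I expect a two-level argument: first condition on the supports and apply Hanson-Wright over the signs, then bound the norms of $B$ by showing that collisions are rare when $s$ is large enough. The Frobenius norm squared is essentially $\tfrac{1}{s^2}\sum_{r}\sum_{i \neq i' \in S_r} v_i^2 v_{i'}^2$ (summing over columns landing in row $r$), and I would bound its expectation over the support randomness, using that $\sum_i v_i^2 = 1$ keeps the total mass controlled, while the factor $1/s$ suppresses each collision. The operator norm requires more care and is where the $\log(d/\delta)$ factor in $s$ enters: I would bound $\|B\|_2$ by a combination of its Frobenius norm and a crude row-sum (Gershgorin-type) bound, and argue that the maximum column overlap concentrates. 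Verifying that the stated choice $s = O(\epsilon^{-1}\log(1/\delta)\log(d/\delta))$ makes both norm terms simultaneously $O(\epsilon)$ with the required probability is the delicate quantitative heart of the proof; the rest is bookkeeping with the union bound and the moment-to-tail conversion.
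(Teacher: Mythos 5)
Your overall architecture matches the paper's proof exactly: write $\|Rv\|_2^2 - 1$ as a zero-diagonal quadratic form $\sigma^T B \sigma$ in the signs (the diagonal contributes exactly $1$ because each column has exactly $s$ nonzeros of squared magnitude $1/s$), condition on the supports, apply Hanson--Wright (Theorem~\ref{thm:hanson-wright}) over the signs, and convert the $t$-th moment to a tail bound via Markov with $t \approx \log(1/\delta)$. The gaps are in how you propose to control the two norms of $B$, and one of them would make the argument fail as stated.

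First, bounding only $\E[\|B\|_F^2]$ over the support randomness is not enough. Indeed $\E[\|B\|_F^2] = O(1/k)$, since each pair of columns collides in $s^2/k$ rows in expectation, but Hanson--Wright needs a bound on $\|B\|_F$ that holds on an event of probability $1 - O(\delta)$, and Markov applied to $\|B\|_F^2$ only gives $\|B\|_F^2 = O(1/(k\delta))$ on such an event. That $1/\delta$ (rather than $\log(1/\delta)$) loss is fatal: the term $C\epsilon^{-1}\sqrt{m}\,\|B\|_F$ becomes $\Theta(1/\sqrt{\delta})$ instead of a constant below $1/2$, so the final bound $(1/2)^m \le \delta/2$ is unobtainable. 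The paper's fix is a \emph{uniform} collision bound (Lemma~\ref{lemma:collision_bound}): with probability $1-\delta/2$, \emph{every} pair of columns $i \neq j$ shares $\sum_r \eta_{r,i}\eta_{r,j} = O(s^2/k)$ rows; this is proved by a Chernoff-type bound for sampling without replacement (the collision count is hypergeometric) plus a union bound over the $d^2$ pairs, and this union bound over pairs is precisely where the $\log(d/\delta)$ factor in $s$ enters --- not the operator norm, as you suggest. Conditioned on that event, $\|B\|_F^2 = O(1/k)$ holds deterministically, which is what the moment bound needs.

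Second, the operator norm requires no randomness and no Gershgorin argument: each diagonal block of $B$ equals $\frac{1}{s}\bigl(u_r u_r^T - \mathrm{diag}(u_r u_r^T)\bigr)$ where $(u_r)_i = \eta_{r,i} v_i$, and since both $u_r u_r^T$ and its diagonal are positive semidefinite with operator norm at most $\|u_r\|_2^2 \le \|v\|_2^2 = 1$, every block (hence $B$ itself, being block-diagonal) satisfies $\|B\|_2 \le 1/s$ deterministically. A Gershgorin row-sum bound, by contrast, can be as large as roughly $|v_i|\sqrt{ds/k}/s$ (take $v$ with one large coordinate and the rest spread over the columns hitting row $r$), which exceeds $1/s$ whenever $d \gg k/s$; so the route you sketch for the operator norm would genuinely not give the bound needed to make $m\|B\|_2 = O(\epsilon)$ with $m \approx \log(1/\delta)$ and $s = O(\epsilon^{-1}\log(1/\delta)\log(d/\delta))$.
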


\begin{proof}[Proof]
    The theorem statement is equivalent to
    \begin{equation}
        \Pr\left[ \left| ||Rv||_2^2 - 1 \right| > \epsilon \right] \leq \delta
    \end{equation}
We introduce some notation for simplicity. Let $R_{r,i} = \sigma_{r,i} \eta_{r,i} / \sqrt{s}$, where $\sigma_{r,i} \in \{ -1, +1\}$ is a random sign and $\eta_{r,i} \in \{0,1\}$ indicates whether entry $(r,i)$ is non-zero. Then,
\begin{equation}
    ||Rv||_2^2 - 1 = \sum_{r=1}^k \left( \sum_{i=1}^d R_{r,i} v_i \right)^2 = \frac{1}{s} \sum_{r=1}^k \left( \sum_{i=1}^d \sigma_{r,i} \eta_{r,i} v_i \right)^2 - 1
\end{equation}
\begin{equation}
    = \frac{1}{s} \sum_{r=1}^k \left( \sum_{i=1}^d \sigma_{r,i}^2 \eta_{r,i}^2 v_i^2 + \sum_{i \neq j \in [d]} \sigma_{r,i} \sigma_{r,j} \eta_{r,i} \eta_{r,j} v_i v_j \right) - 1
\end{equation}

Note that $\sigma_{i,j}^2 = 1$ almost surely and $\eta_{i,j}^2 = \eta_{i,j}$. Further, since each column has exactly $s$ non-zero entries, we have $\sum_{i=1}^k \eta_{i,j} = s$ for all $j$. Therefore we define $Z$, the deviation from the mean, as
\begin{equation}
    Z:=||Rv||_2^2 - 1 = \frac{1}{s} \sum_{r=1}^k \sum_{i \neq j \in [d]} \sigma_{r,i} \sigma_{r,j} \eta_{r,i} \eta_{r,j} v_i v_j
\end{equation}

We can think of $Z$ as summing the contributions of deviance from the mean from each row of $R$. Such deviations come from `collisions', where two non-zero entries in the same row interact. If there were as many rows as columns, there could be no collisions, but this would require $k = d$. With $k < d$, if we want to sample all $d$ input coordinates, we must use $s > 1$ and sample each coordinate multiple times, leading to collisions.

In expectation, the deviance is zero, since the random signs $\sigma_{i,j}$ are independent and have mean zero. However, we need to bound the probability that the deviance is large in magnitude. To bound its tail probability, we apply a Markov bound on the $m$-th moment of $|Z| \ge 0$ for some even integer $m$. Note that $|Z|^m = Z^m$ since $m$ is even. Thus, for any $\epsilon > 0$,

\begin{equation}
    \Pr[|Z| > \epsilon] = \Pr[Z^m > \epsilon^m] \leq \epsilon^{-m}\E[Z^m]
\end{equation}

In order to apply the Hanson-Wright theorem, we now express $Z$ as a quadratic form in the random signs $\sigma_{i,j}$ such that $Z = \sigma^T A \sigma$, where $\sigma\in\mathbb{R}^{kd}$ is the vector of all random signs and $A$ is a matrix dependent on the indicators $\eta_{i,j}$ and the vector $v$. Specifically, $A$ is a block-diagonal matrix with $k$ blocks of size $d \times d$, where the $i$-th block $A_i$ has entries $(A_i)_{j,l} = \eta_{i,j} \eta_{i,l} v_j v_l / s$ for $j \neq l$ and zeros on the diagonal. We assume the $\eta_{i,j}$ are fixed for now and later return to constraints that they must satisfy.

We now apply the Hanson-Wright theorem~\ref{thm:hanson-wright} to bound the $m$-th moment of $Z$. Note that $\E[\sigma^T A \sigma] = \E[Z] = 0$ since the random signs have mean zero. Thus, for any integer $m \geq 1$,
\begin{equation}
    \Pr[|Z| > \epsilon] \leq C^m \epsilon^{-m} \max\{\sqrt{m} ||A||_F, m ||A||_2\}^m
\end{equation}

We now require bounds on the Frobenius norm $||A||_F$ and the operator norm $||A||_2$ of the matrix $A$. These are provided by the following lemmas.

\begin{lemma}[Collision Bound]\label{lemma:collision_bound}
    With $s=O(\epsilon^{-1} \log(1/\delta) \log(d/\delta))$, then with probability at least $1-\delta/2$, the set of indicators $\eta_{i,j}$ satisfies that for all $i \neq j \in [d]$, $\sum_{r=1}^k \eta_{r,i} \eta_{r,j} = O(s^2/k)$.
\end{lemma}

\begin{lemma}[Norm Bounds]\label{lemma:norm_bounds}
If for all $i \neq j \in [d]$ we have $\sum_{r=1}^k \eta_{r,i} \eta_{r,j} = O(s^2/k)$, then
\begin{equation}
\|A\|_F^2 = O(1/k)
\quad\text{and}\quad
\|A\|_2 \le \frac{1}{s}.
\end{equation}

\end{lemma}
The proof of Lemma~\ref{lemma:collision_bound} can be found in Proof~\ref{proof:collision_bound}. For a short proof of the norm bounds Lemma~\ref{lemma:norm_bounds}, see~\cite{sparser}. Now applying these bounds to the previous inequality, we have
\begin{equation}
    \Pr[|Z| > \epsilon] \leq (C/\epsilon \max\{O(\sqrt{m/k}), m/s\})^m
\end{equation}

Choosing $m \ge \log(2/\delta)$ and setting $s = O(\epsilon^{-1} \log(1/\delta))$ and $k = O(\epsilon^{-2} \log(1/\delta))$ then we find $\sqrt{m/k} = O(\epsilon)$ and $m/s = O(\epsilon)$. Thus, for some set of constants, we can make $C/\epsilon \max\{\sqrt{m/k}, m/s\} \leq 1/2$, yielding
\begin{equation}
    \Pr[|Z| > \epsilon] \leq (1/2)^m \leq (1/2)^{\log(2/\delta)} = \delta/2
\end{equation}

Applying a union bound over this failure probability and the failure probability of Lemma~\ref{lemma:collision_bound} gives a total failure probability of at most $\delta$, completing the proof.
\end{proof}

\citet{sparser} omit the details of Lemma~\ref{lemma:collision_bound} as they later prove of a tighter version of Theorem~\ref{thm:sparser_jl} with $s = O(\epsilon^{-1} \log(1/\delta))$. They derive this bound from first principles over the randomness of both the $\eta_{i,j}$ and $\sigma_{i,j}$. This analysis is much longer and more involved so is omitted here. Instead, I provide a proof for Lemma~\ref{lemma:collision_bound}.

\begin{proof}[Proof of Collision Bound Lemma~\ref{lemma:collision_bound}]\label{proof:collision_bound}

    Fix columns $i,j$ where $i\neq j$. Let $X_r$ denote the indicator random variable for a collision in row $r$. That is, $X_r = \eta_{r,i} \eta_{r,j}$. Define $X:=\sum_{r=1}^k X_r = \sum_{r=1}^k \eta_{r,i} \eta_{r,j}$ to be the total number of collisions. We wish to show that $X = O(s^2/k)$ with high probability.
    
    Notice that $\E[X_i] = \Pr[\eta_{r,i} = 1 \text{ and } \eta_{r,j} = 1] = s^2 / k^2$, since each column has $s$ non-zero entries chosen uniformly at random. Therefore, $\E[X] = \sum_{r=1}^k \E[X_r] = k \cdot s^2 / k^2 = s^2 / k$. In fact, $X\sim\text{Hypergeometric}(k, s, s)$, since we are sampling $s$ rows without replacement for each column from a total of $k$ rows with $s$ `successes' (non-zero entries) in each column.
    
    This motivates the use of a Chernoff bound to show concentration around the mean. Though the $X_i$ are not independent, they are negatively dependent, and \citet{hypergeomtail} show that Chernoff-style bounds hold for draws without replacement. Thus we use a one-sided Chernoff bound to show concentration below the mean,
    \begin{equation}
        \Pr[X > 2 s^2 / k] \leq \exp(-c s^2 / k)
    \end{equation}
    for some constant $c > 0$. We now apply a union bound over all pairs of columns. There are at most $d^2$ such pairs, so
    \begin{equation}
        \Pr[\exists i \neq j \in [d], X > 2 s^2 / k] \leq d^2 \exp(-c s^2 / k)
    \end{equation}

    Setting the right-hand side to be less than $\delta/2$ and solving for $s$ yields $s = O(\epsilon^{-1}\log(d/\delta)\log(1/\delta))$.
\end{proof}
\section{Experiments}\label{sec:experiments}

To compare and investigate the performance of the various Johnson-Lindenstrauss transforms described in Section~\ref{sec:jl_transforms}, I conduct a series of experiments focusing on two key aspects: sparsity and norm preservation. These experiments aim to evaluate how well each JL transform maintains the geometric properties of the original data while also considering the computational efficiency afforded by sparsity.

\subsection{Setup}

For each experiment, I generate $n=5,000$ input vectors in $\mathbb{R}^d$ where $d=10,000$. The vectors are generated in two different ways. A first set are sampled uniformly at random from the unit sphere in $\mathbb{R}^d$. A second set are sparse, where each vector has only $t$ non-zero entries, with the positions of the non-zero entries chosen uniformly at random. All vectors are of unit magnitude though, as we have seen, the distortion of a linear map is invariant under scaling of the input vectors. I select $k=50$, $s=16$ and $t=5$ for all experiments unless otherwise stated. 


I then sample each of the three JL transforms described in Section~\ref{sec:jl_transforms} and apply them to the set of input vectors. In all experiments, I sample $30$ independent instances of each JL transform and, where relevant, report the mean and standard deviation over instances.

I calculate the \emph{distortion} of each JL transform on the each input vectors where distortion is defined as \[
    \Delta := \|Rx\|_2^2 - 1,
\] since all input vectors are of unit magnitude. I then report various statistics on the distortion across all input vectors and JL transform instances.

I refer to the three JL transforms used as \emph{Dense}, \emph{Ach} and \emph{Sparse} where \emph{Dense} refers to the standard JL with entries drawn from independent Gaussians introduced in Section~\ref{sec:standard_jl}, \emph{Ach} refers to the transform from~\citet{binarycoins} discussed in Section~\ref{sec:achlioptas} and \emph{Sparse} refers to the sparsest transform from~\citet{sparser} discussed in Section~\ref{sec:sparser}. 

\subsection{Results}

\textbf{1. Matrix sparsity $s$}: I first investigate the effect of varying $s$ on the distribution of distortion for the \emph{Sparse} construction. A plot showing quantiles of distortion while varying $s$ for both sparse and dense input vectors is shown in Figure~\ref{fig:sparsity_vs_distortion}. The plot also shows the distortion quantiles for the \emph{Ach} construction as a reference. Note that the \emph{Dense} construction has very similar distortion to the \emph{Ach} construction and is therefore omitted for clarity.

In the plot, we see that the \emph{Sparse} transform does better as we increase $s$ but only with sparse vectors with the exception of the median, where $s=1$ dominates. For dense vectors, all values of $s$ and the \emph{Ach} transform perform nearly identically.

\begin{figure}[ht]
  \centering
  \includegraphics[width=\linewidth]{./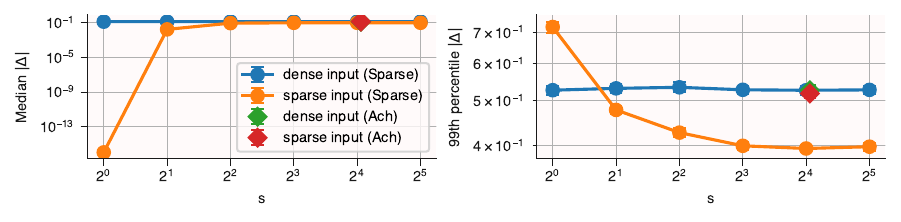}
  \caption{Various Quantiles of distortion vs sparsity \( s \) for the Sparse JL transform on both sparse and dense input vectors. The means and standard deviations across 30 independent instances are plotted. The Achlioptas transform is shown as a reference. 
  }\label{fig:sparsity_vs_distortion}
\end{figure}

\textbf{2. Input sparsity $t$}: To further investigate why the only notable difference in distribution is seen with sparse input vectors, I vary the sparsity of the input vectors while keeping \( s=16 \) fixed for the \emph{Sparse} transform. The results are shown in Figure~\ref{fig:input_sparsity_vs_distortion}. We see that as the input vectors become denser, the distortion quantiles converge to those of the \emph{Ach} transform. This suggests that the \emph{Sparse} transform is particularly well-suited for preserving norms of sparse input vectors, while its advantage diminishes with denser inputs.

\begin{figure}[ht]
  \centering
  \includegraphics[width=\linewidth]{./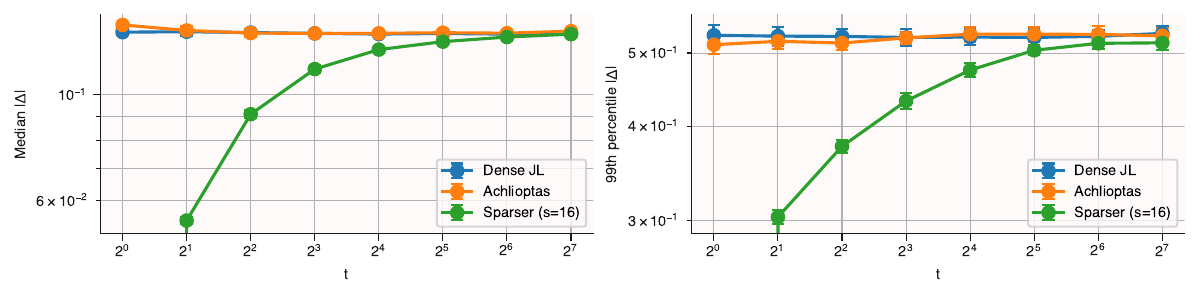}
  \caption{The median and 99th percentile of distortion vs input vector sparsity for the \emph{Sparse} transform with \( s=16 \). 
  }\label{fig:input_sparsity_vs_distortion}
\end{figure}

The quantile is not plotted for $t=1$. This is because distortion is always 0 since there are a fixed $s$ non-zero entires per column. When $t=1$, only one column from $R$ is used, and there are no collisions, thus the input is reconstructed perfectly. Increasing to $t=2$ introduces the possibility of collisions, leading to non-zero distortion. As we increase $t$, the likelihood of collisions increases, resulting in higher distortion quantiles. The same logic does not apply to the \emph{Ach} transform since we have no guarantee on the number of non-zero entries per column. This is what leads to the difference in performance between the two transforms for very sparse input vectors, and the convergence as the input vectors become denser.

\textbf{3. Distortion Distribution}: In order to compare the distortion across all transforms, the cumulative distribution function (CDF) of distortion is plotted in Figure~\ref{fig:cdf_s} for only sparse input vectors. The distributions are nearly identical for sparse vectors. Here, we can clearly see the effect of varying $s$. With lower $s$ there is a greater chance of few or no collisions, thus the median absolute distortion is lower. However, when there are distortions, they tend to be larger, leading to a heavier tail. 

\begin{figure}[ht]
  \centering
  \includegraphics[width=\linewidth]{./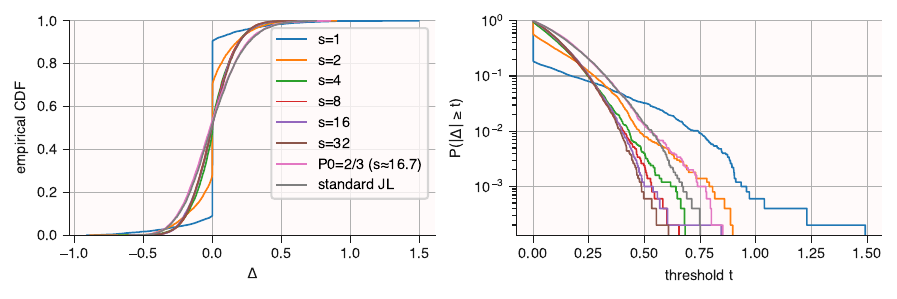}
  \caption{CDF of distortion for all three JL transforms on dense input vectors. The left plot shows the full CDF while the right plot shows the tail on a logarithmic scale.
  }\label{fig:cdf_s}
\end{figure}

\textbf{4. Target dimension $k$}: As a final experiment, I vary $k$ while keeping $s=16$ fixed for the \emph{Sparse} transform and plot the median and 99th percentile of distortion in Figure~\ref{fig:distortion_vs_k_sparse_vs_dense}. The results are shown for both dense and sparse input vectors. All results are very similar where the transform's distortion decreases as \( k \) increases. Note the log-scale on the y-axis which leads to a linear relationship between distortion and target dimension \( k \) as we would expect from the exponential tail bounds. There is one exception: \emph{Sparse} with sparse input vectors has a lower distortion across the board, for the same reasons discussed previously.

\begin{figure}[ht]
  \centering
  \includegraphics[width=\linewidth]{./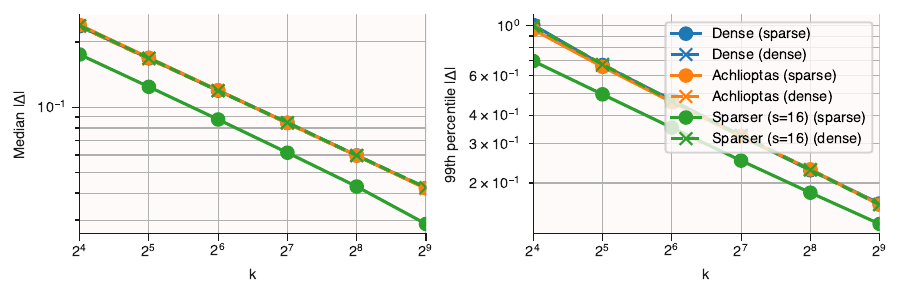}
  \caption{Median and 99th percentile of distortion vs target dimension \( k \) for both sparse and dense input vectors. Note the log-scale leading to a linear relationship.
  }\label{fig:distortion_vs_k_sparse_vs_dense}
\end{figure}
\section{Conclusion}\label{sec:conclusion}

In this project, I have explored various Johnson-Lindenstrauss transforms with a focus on their proofs, constructions, and empirical performance. Efforts over the last 40 years have led to increasingly efficient JL transforms, particularly those which can handle sparse inputs effectively. Experiments here have demonstrated the claimed advantages of sparse JL transforms when dealing with sparse input data, showcasing their ability to preserve norms. A different set of experiments could be conducted to empirically evaluate computational efficiency, or to contrast sparse JL transforms with other dimensionality reduction techniques in a variety of applications.

\newpage
\bibliographystyle{plainnat}
\bibliography{references}



\end{document}